\documentclass[lettersize,journal]{IEEEtran}
\usepackage{amsmath,amsfonts,amssymb,amsthm}
\newtheorem*{theorem*}{Theorem}
\usepackage{algorithmic}
\usepackage[algoruled, vlined, linesnumbered]{algorithm2e}
\usepackage{array}
\usepackage[caption=false,font=normalsize,labelfont=sf,textfont=sf]{subfig}
\usepackage{textcomp}
\usepackage{stfloats}
\usepackage{url}
\usepackage{verbatim}
\usepackage{graphicx}
\usepackage{cite}
\usepackage[dvipsnames]{xcolor}
\usepackage{hyperref}
\usepackage{multirow}
\usepackage{dsfont}
\usepackage{microtype}
\usepackage{amsmath,amssymb}

\newtheorem{theorem}{Theorem}

\newtheorem{lemma}{Lemma}

\usepackage{stmaryrd}
\newtheorem{assumption}{Assumption}

\DeclareSymbolFont{sansops}{OT1}{\sfdefault}{m}{n}
\SetSymbolFont{sansops}{bold}{OT1}{\sfdefault}{b}{n}

\makeatletter
\renewcommand\operator@font{\mathgroup\symsansops}
\makeatother

\DeclareSymbolFont{sfoperators}{OT1}{cmss}{m}{n}
\DeclareSymbolFontAlphabet{\mathsf}{sfoperators}

\makeatletter
\def\operator@font{\mathgroup\symsfoperators}
\makeatother

\newcommand{\R}{\mathbb{R}}
\newcommand{\C}{\mathbb{C}}

\renewcommand{\Re}{\operatorname{Re}}
\renewcommand{\Im}{\operatorname{Im}}
\newcommand{\conj}[1]{\mkern 1.5mu\underline{\mkern-1.5mu#1\mkern-1.5mu}\mkern 1.5mu}




\newcommand{\vct}[1]{\boldsymbol{#1}}
\newcommand{\mtx}[1]{\boldsymbol{#1}}
\newcommand{\diag}{\operatorname{diag}}


\newcommand{\<}{\langle}
\renewcommand{\>}{\rangle}

\newcommand{\T}{\top}



\newcommand{\ip}[2]{\left\<#1, #2\right\>}

\newcommand{\p}[1]{\left(#1\right)}

\renewcommand{\c}[1]{\left\{#1\right\}}
\newcommand{\abs}[1]{\left|#1\right|}


\newcommand{\set}[1]{\mathcal{#1}}



\newcommand{\linop}[1]{\mathsf{#1}}    



\newcommand{\va}{\vct{a}}
\newcommand{\vb}{\vct{b}}
\newcommand{\vc}{\vct{c}}

\newcommand{\ve}{\vct{e}}
\newcommand{\vf}{\vct{f}}
\newcommand{\vg}{\vct{g}}

\newcommand{\vk}{\vct{k}}

\newcommand{\vp}{\vct{p}}
\newcommand{\vq}{\vct{q}}

\newcommand{\vs}{\vct{s}}

\newcommand{\vu}{\vct{u}}
\newcommand{\vv}{\vct{v}}
\newcommand{\vw}{\vct{w}}
\newcommand{\vx}{\vct{x}}
\newcommand{\vy}{\vct{y}}
\newcommand{\vz}{\vct{z}}

\newcommand{\vgamma}{\vct{\gamma}}
\newcommand{\vdelta}{\vct{\delta}}

\newcommand{\vkappa}{\vct{\kappa}}

\newcommand{\vphi}{\vct{\phi}}

\newcommand{\vzero}{\vct{0}}
\newcommand{\vone}{\vct{1}}


\newcommand{\mA}{\mtx{A}}
\newcommand{\mB}{\mtx{B}}

\newcommand{\mE}{\mtx{E}}
\newcommand{\mF}{\mtx{F}}
\newcommand{\mG}{\mtx{G}}

\newcommand{\mJ}{\mtx{J}}
\newcommand{\mK}{\mtx{K}}

\newcommand{\mY}{\mtx{Y}}

\newcommand{\mId}{\mathbf{I}}

\newcommand{\loA}{\linop{A}}

\newcommand{\loC}{\linop{C}}


\newcommand{\setE}{\set{E}}

\newcommand{\setG}{\set{G}}

\newcommand{\setN}{\set{N}}

\newcommand{\setP}{\set{P}}

\newcommand{\setS}{\set{S}}

\newcommand{\setV}{\set{V}}

\newcommand{\setZ}{\set{Z}}

\newcommand{\st}{\operatorname{\sf s.t. }}

\newcommand{\ow}{\operatorname{\sf otherwise }}

\usepackage{adjustbox}
\usepackage{booktabs}
\usepackage{arydshln}
\usepackage{orcidlink}
\usepackage{tikz}
\usetikzlibrary{shapes, arrows, quotes, calc}
\usepackage[straightvoltages, americancurrents, smartlabels]{circuitikz}


\tikzset{breaker/.style={generic, %
        bipoles/generic/width=#1, bipoles/generic/height=#1,
    },
    breaker/.default=0.20
}
\usepackage{hyperref}
\hypersetup{
    colorlinks=false,
    linkcolor=blue,
    filecolor=magenta,      
    urlcolor=cyan,
    pdfpagemode=FullScreen,
    }
\urlstyle{same}

\newcommand{\uj}{\mathrm{j}}

\newcommand{\vvhat}{\vct{\hat{v}}}
\newcommand{\vell}{\vct{\ell}}
\newcommand{\vellhat}{\vct{\hat{\ell}}}

\title{Differentiating Through Power Flow Solutions for Admittance and Topology Control}
\author{Samuel Talkington \orcidlink{0000-0001-5768-8115}, Daniel Turizo \orcidlink{0000-0001-7324-0185}, Sergio A. Dorado-Rojas \orcidlink{0000-0002-5372-1763}, Rahul K. Gupta \orcidlink{0000-0002-9905-6092}, Daniel K. Molzahn \orcidlink{0000-0003-0583-5376}
\thanks{
S. Talkington, D. Turizo, S. A. Dorado-Rojas, and D. K. Molzahn are with the School of Electrical and Computer Engineering, Georgia Institute of Technology, Atlanta, GA, USA. 
Email: \{\href{mailto:talkington@gatech.edu}{talkington}, \href{mailto:djturizo@gatech.edu}{djturizo}, \href{mailto:sadr@gatech.edu}{sadr}, \href{mailto:molzahn@gatech.edu}{molzahn}\}@gatech.edu.
}%
\thanks{
R. K. Gupta is with the School of Electrical Engineering and Computer Science, Washington State University, Pullman, WA, USA. Email: \href{mailto:rahul.k.gupta@wsu.edu}{rahul.k.gupta@wsu.edu}.
}%
\thanks{
This material is based upon work supported by the National Science Foundation Graduate Research Fellowship Program under Grant No. DGE-1650044. Any opinions, findings, and conclusions or recommendations expressed in this material are those of the author(s) and do not necessarily reflect the views of the National Science Foundation.
}
}

\begin{document}

\maketitle

\begin{abstract}
    The power flow equations relate bus voltage phasors to power injections via the network admittance matrix. These equations are central to the key operational and protection functions of power systems (e.g., optimal power flow scheduling and control, state estimation, protection, and fault location, among others). As control, optimization, and estimation of network admittance parameters are central to multiple avenues of research in electric power systems, we propose a linearization of power flow solutions obtained by implicitly differentiating them with respect to the network admittance parameters. This is achieved by utilizing the implicit function theorem, in which we show that such a differentiation is guaranteed to exist under mild conditions and is applicable to generic power systems (radial or meshed). The proposed theory is applied to derive sensitivities of complex voltages, line currents, and power flows. The developed theory of linearizing the power flow equations around changes in the complex network admittance parameters has numerous applications. We demonstrate several of these applications, such as predicting the nodal voltages when the network topology changes without solving the power flow equations. We showcase the application for continuous admittance control, which is used to increase the hosting capacity of a given distribution network.
\end{abstract}

\begin{IEEEkeywords}
Network topology, topology control, admittance matrix, sensitivity coefficients, differentiable optimization
\end{IEEEkeywords}

\section{Introduction}
\IEEEPARstart{T}{he} power flow equations are widely used for modeling the relationship between the voltage phasors and the power injections in power systems, and are fundamental for key operational and protection functions of power systems. These are used for a number of applications such as state estimation~\cite{schweppe1974static}, scheduling and control based on the optimal power flow (OPF)~\cite{mercier2009optimizing, celli2005multiobjective}, fault location (e.g.,~\cite{saha2010fault}), and protection (e.g.,~\cite{perez2001optimal}), among others. The power flow equations are inherently nonlinear, posing significant computational challenges. This is poignantly true in combinatorial optimization problems, where discrete variables compound with the non-convex AC OPF problem, which can result in an optimization problem that becomes intractable as the size of the network grows. A class of such problems that have attracted considerable interest in the power system literature\textemdash and network science more broadly\textemdash is termed as \emph{topology control} problems. Examples of problems in this class include line switching problems for power shut-offs~\cite{mazi1986corrective}, loss reduction~\cite{bacher1988, fliscounakis2007, fliscounakis2009}, distribution network reconfiguration~\cite{baran-reconfiguration-1989,taylor-convex-reconfiguration-2012,gorka_efficient_radiality_2022}, congestion management via bus splitting~\cite{koglin1982, babaeinejadsarookolaee2022a, babaeinejadsarookolaee_congestion_2023}, etc.

Moreover, a conceptually similar topic\textemdash flexible AC transmission systems (FACTS)\textemdash has been of interest at the transmission scale for a significant period where continuous control of shunt reactors is exercised \cite{peelo1996new, xiao2003available, sahraei2015day}. Existing literature (e.g., \cite{divan2004distributed, johal2007design, kreikebaum2010smart}) suggests the use of series line reactors for active power flow control in power systems. Recent work has developed continuous models \cite{agarwal_continuous_switch_2023,babaeinejadsarookolaee_congestion_2023}. These approaches, belonging to the \emph{continuous admittance} framework, are inclusive of models for transmission line switches, node-breaker substations, and other technologies such as smart wires and unified power flow controllers (UPFCs). This framework has seen application in combinatorial switching problems \cite{agarwal_continuous_switch_2023} and current congestion management \cite{babaeinejadsarookolaee_congestion_2023}. Overall, both the topology control and the continuous admittance problems in power systems are nonlinear and non-convex, often resulting in intractable formulations.

\textit{Related work:}
In the existing literature, different techniques have been investigated to tackle topology control problems, such as convex relaxation techniques (e.g., \cite{low2014convex1, low2014convex2}), linearized approximations (e.g., \cite{ruiz_topology_sensitivities_2007, jabr2019high, bernstein2018load}), machine learning techniques \cite{authier_physics-informed_2024,haider_topology_2025}, and other heuristics \cite{ruiz_topology_heuristics_2011,ruiz_reduced_mip_2012}. These methods are valuable because of their computational efficiency\textemdash in contrast, direct methods for topology control problems typically comprise mixed-integer non-linear programs, which are NP-hard in general and potentially pathological. 

Several simplification techniques are based on distribution factors~\cite{ruiz_topology_sensitivities_2007,ruiz_reduced_mip_2012,guler_generalized_outage_2007,titz2025voltagesensitivedistributionfactorscontingency}, which represent the sensitivity of the power system state to changes in the decision variables. In a typical OPF problem, the power system state refers to the bus voltage phasors, and the decision variables are typically the generators' set-points (or power injections in general). For such applications, the voltage sensitivity coefficients are derived as in~\cite{christakou2013efficient}. In other cases, this theory also has been applied to obtain power losses and current-flow sensitivity factors, which are used to minimize losses and tackle line-current congestion problems~\cite{gupta2020grid, fahmy2021analytical, maharjan2024generalized}. 

Linearization techniques have also been applied to line-switching and topology control problems, such as power transfer distribution factors (PTDFs), line outage distribution factors (LODFs) and outage transfer distribution factors (OTDFs) that are widely used in transmission system applications~\cite{ruiz_reduced_mip_2012,guler_generalized_outage_2007,titz2025voltagesensitivedistributionfactorscontingency}. PTDFs are used for the expressing incremental changes in real power flow that occur on transmission lines due to real power transfers between two regions; these factors are frequently used for pre-contingency analysis by the system operators and can be used to optimize the generator setpoints with the objective to minimize congestion. On the other hand, LODFs are used to determine the impacts of line outages on power flows. Both of these distribution factors often use the DC power flow approximation which is reasonable only for transmission systems.

Recent work in AC line outage distribution factors represents line flow changes with respect to outages while incorporating the full AC power flow equations. The authors of~\cite{yao_novel_ac_factors_2020} developed a method based on the holomorphic embedding load flow (HELF) method. The authors of \cite{guler_generalized_outage_2007} extended the traditional line outage distribution factor computation method to handle an arbitrary number of discrete line outages. 
The focus of previous work in the literature has been on the impact of changes in power flows due to incremental changes in the power injections. Previous and recent studies have extended these ideas to understanding the sensitivity of power flow solutions to changes in line admittances~\cite{ruiz_topology_sensitivities_2007,ruiz_topology_heuristics_2011,titz2025voltagesensitivedistributionfactorscontingency}, due to their potential for use in various applications in topology control and continuous admittance control. However, the current literature has not yet achieved a formal and complete treatment of this topic, which motivates our own.

Our paper presents a general framework for distribution factors that express the sensitivities of the power system state to changes in the network admittance parameters. These coefficients are computed by implicit differentiation of the power flow equations with respect to the network admittance parameters. The proposed coefficients can be used to produce linearized optimization problems for a number of applications in power transmission and distribution networks alike, including topology control for congestion management and continuous admittance control for voltage regulation.





\textit{Contributions:}
This work presents a generalized technique to analytically differentiate power flow solutions with respect to network admittance parameters. Earlier investigations of this idea, such as the work of~\cite{ruiz_topology_sensitivities_2007,ruiz_topology_heuristics_2011,ruiz_reduced_mip_2012}, show that this technique enables several admittance control applications to be made more efficient. 
Our work contributes an array of new generalizations and extensions of this earlier work, spanning theory, modeling, and applications. In terms of modeling, we consider a broad range of electrical quantities of interest that derive from power flow solutions, generalizing the sensitivity computations to line currents and to arbitrary bus types; in addition, the proposed method is capable of handling shunt admittances, and connections that do not yet exist\textemdash these capabilities are new, to the knowledge of the authors. 

Moreover, these advances open the door to a wide range of applications in improving the scalability of both continuous and discrete topology control and optimization problems. We demonstrate just a few of these applications\textemdash topological impact prediction, voltage control, and congestion management\textemdash in Section~\ref{sec:application}.

Further, we provide additional theoretical contributions that extend the existing literature. Compared to \cite{ruiz_topology_sensitivities_2007,titz2025voltagesensitivedistributionfactorscontingency}, we take the total derivative of the bus injection model of the power flow equations in polar coordinates with respect to network admittance parameters.  Then, by applying the Implicit Function Theorem, 
we show that the Jacobian of a power flow solution with respect to admittance parameters exists and can be uniquely obtained by solving a linear system of equations, as long as the network is not in a state of static voltage collapse, i.e., the canonical power flow Jacobian is invertible. This mild assumption ensures our method's practical relevance.


We also present several applications of the proposed method. The method 
provides a continuous approximation of power flow solutions as a function of admittance parameters. This allows for the outcome of the iterative procedure of solving the AC power flow equations with changed admittance to be predicted by taking linearizations about arbitrary admittance parameters of the user's choice. This is particularly useful when performing fast analyses of topology control.

The paper is organized as follows. Section~\ref{sec:nomen_pf_model} reviews the power flow model. Section~\ref{sec:PF_differentiation} presents the main theory for differentiating the power flow solutions with respect to the admittance parameters. Section~\ref{sec:application} demonstrates various applications. Section~\ref{sec:experiments} presents different experiments for these applications. Section~\ref{sec:conclusion} concludes the paper.

\section{Power flow models}
\label{sec:nomen_pf_model}
In this section, we review the bus injection model of the power flow equations. From this, a novel parameterized system of equations is developed that illustrates the conditional linear dependence of bus power injections on the network admittance parameters, which we utilize throughout the work.

\subsection*{Nomenclature}

\textit{Matrix and complex operator notation:}
We denote the sets of real and complex numbers as $\R$ and $\C$, respectively. Given a matrix $\mA$ or a vector $\va$, we denote their transposes as $\mA^\T$  and $\va^\T$, respectively. The $k$-th entry of a vector $\va$ is denoted $(\va)_k$. The operator $\circ$ denotes elementwise multiplication (the Hadamard product). A complex scalar $Z$ with real and imaginary components $X,Y$ is denoted as $Z := X + \uj Y$, where $\uj$ is the imaginary unit, $\uj^2 := -1$. The complex conjugate of scalar $Z=X + \uj Y$ for $X,Y\in\R$ is denoted $\conj{Z} = X - \uj Y$ (analogous definitions apply for complex vectors and matrices). The indicator function is denoted as $\mathds{1}\{ \cdot\}$. For any function $f$ of a complex variable $Z = X + \uj Y$, the Wirtinger derivatives (see \cite{hunger2007introduction}), if they exist, are 
\begin{subequations}
\label{eq:wirtinger-from-components}
\begin{align}
    \frac{\partial f}{\partial Z} &:= \frac{1}{2} \left(\frac{\partial f}{\partial X} - \uj \frac{\partial f}{\partial Y}\right) \\
    \frac{\partial f}{\partial \conj{Z}} &:= \frac{1}{2} \left(\frac{\partial f}{\partial X} + \uj \frac{\partial f}{\partial Y}\right).
\end{align}
\end{subequations}

Wirtinger derivatives will prove useful, as they satisfy the standard rules of differentiation when $Z$ and $\conj{Z}$ are treated as different, independent variables. If the Wirtinger derivatives are known, then the derivatives with respect to the real and imaginary parts can be recovered as
\begin{subequations}
\label{eq:wirtinger-to-components}
\begin{align}
    \frac{\partial f}{\partial X} &= \frac{\partial f}{\partial Z} + \frac{\partial f}{\partial \conj{Z}} \\
    \frac{\partial f}{\partial Y} &= \uj \left(\frac{\partial f}{\partial Z} - \frac{\partial f}{\partial \conj{Z}}\right).
\end{align}
\end{subequations}
Similar relationships hold for complex vector derivatives.

\textit{Network model:} Let~$\setG = (\setN,\setE)$ be an undirected graph corresponding to an electric power system with nodes~$\setN := \{1,\dots,n\}$, and branches~$\setE \subseteq \setN \times \setN$, with~$\setE :\cong \{1,\dots,m\}$. We denote the set of voltage-controlled nodes as~$\setV \subseteq \setN$, and the total number of voltage-controlled nodes as~$g = |\mathcal{V}|$. The set of PQ nodes, where voltage is not controlled, is denoted~$\setP=\setN \setminus \setV$. When ordering the nodes of~$\setP$ and~$\setV$ in ascending order, the~$i$-th node of~$\setV$ is denoted~$\setV(i)$, and the~$i$-th node of~$\setP$ is denoted~$\setP(i)$. If~$k \in \setP$, then~$\setP^{-1}(k)$ denotes the position of $k$ in the list obtained by sorting~$\setP$ in ascending order. We also define~$\setV^{-1}(k)$ in an analogous way.

\textit{Power flow variables and parameters:} Let the state of the network (i.e., the nodal complex voltages in rectangular coordinates) be~$\vx \in \C^n$. In polar coordinates, the voltage magnitudes and phase angles are~$\vv \in \R^n$ and~$\vdelta \in (-\pi,\pi]^n$, respectively; we denote the state as $\vv \angle \vdelta := \vv \circ \exp(\uj \vdelta) \in \C^n$, where $\exp(\cdot)$ is applied elementwise. Thus, $x_i := v_i(\cos(\delta_i) + \uj \sin(\delta_i))$ for all $i \in \setN$. Denote by $\vv_{\setP} \in \R^{(n-g)}$ and $\vv_{\setV} \in \R^g$ the vectors of voltage magnitude at nodes in $\setP$ and $\setV$, respectively. Then the entries of $\vv$ can be written as
\begin{equation}
    (\vv)_k := \begin{cases}
        (\vv_{\setP})_{\setP^{-1}(k)} & k \in \setP \\
        (\vv_{\setV})_{\setV^{-1}(k)} & k \in \setV.
    \end{cases}
\end{equation}
As the voltages in $\vv_{\setV}$ are specified, we only need to solve the power flow equations for $\vv_{\setP}$ and $\vdelta$. It is sometimes convenient to write $\vv$ as a function of $\vv_{\setP}$, that is $\vv = \vv(\vv_{\setP})$, as a way to emphasize that only the entries of $\vv$ associated with nodes in $\setP$ are considered variables in the power flow equations. 

Branch current flows in rectangular coordinates are denoted as $\vu \in \C^m$. In polar coordinates, let~$\vell \in \R^m$ and~$\vphi \in (-\pi,\pi]^m$ denote the branch current magnitudes and phase angles, respectively. Thus,~$\vu := \vell \circ \exp(\uj \vphi)$ and~$u_{ij} := \ell_{ij} (\cos(\phi_{ij}) + \uj \sin(\phi_{ij}))$ for all~$(i,j) \in \setE$.

\begin{figure}[h]
    \begin{center}        
        \begin{tikzpicture}[thick, american voltages]

        \node[coordinate] (I1) at (0, 0){}; 
        \node[coordinate] (I2) at (0, -2){};
    
        \draw (I1) to [open, v=$$,] (I2);
        \node at (0.5, -1) {\footnotesize${x_{i}=v_i \exp (\uj\delta_i)}$};
        
        \draw (I1) -- ++(2,0) node[coordinate, name=Vi]{};
        \draw (Vi) to [generic={\footnotesize $y_{i}$}] ++(0, -2);
        \draw (Vi) to [generic, l^={\footnotesize$y_{ij}$}, i={$$}] ++(3, 0) node[coordinate, name=Vj]{};
        \node at (5.15, 0.30) {\footnotesize $u_{ij} = \ell_{ij} \cos \phi_{ij}$};
        \draw (Vj) to [generic={\footnotesize $y_{j}$}] ++(0, -2);
    
        \draw (Vj) to [short] ++(2, 0) node[coordinate, name=O1]{};
        \draw (O1) to [open, v=$$] ++(0, -2) node[coordinate, name=O2]{};
        \node at (6.85, -1) {\footnotesize $x_j = v_j \exp (\uj\delta_j)$};
        \draw (I2) to [short] (O2);
    
\end{tikzpicture}
    \end{center}
    \caption{Illustration of the $\pi$-line model and the line parametrization used in our work.}
\end{figure}
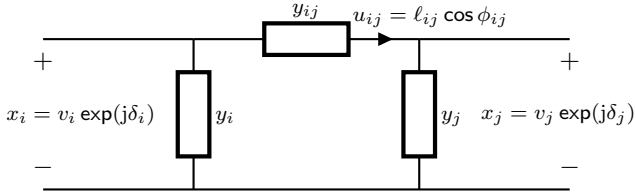

We denote the nodal admittance matrix as $\mY := \mG + \uj \mB \in \C^{n \times n}$. Let $\vs := \vp + \uj \vq \in \C^n$ denote the net complex power injections at each bus. The power flow equations are then
\begin{equation}
    \vs = \diag(\vx) \conj{\mY \vx},
\end{equation}
where $\conj{(\cdot)}$ denotes the complex conjugate and $\diag(\cdot)$ denotes a diagonal matrix with the argument along the diagonal.

\subsection{Admittance-parameterized power flow equations}

Given a power system model $\setG = \left(\setN, \setE \right)$, consider the corresponding complete graph $K_{n} = \left(\setN, \setE_{K} \right)$ where $\setE_{K}$ has all possible connections between the nodes in $\setN$, i.e., $\binom{n}{2} = n(n-1)/2$ edges. Let $\setS \subseteq \setE_{K}$ be the set of all \emph{controllable lines}. We understand the set $\setS$ in a very general sense, where \emph{controllable includes switchable}. Specifically, $\setS$ contains all connections between nodes $i,j \in \setN$ where
\begin{enumerate}
    \item a line admittance can be modified, e.g., lines with ``smart wires" \cite{kreikebaum_smart_wires_2010}, FACTS compensation, or switches,
    \item a line could potentially be created, but does not yet exist, i.e., $\left(i, j \right) \in \setE_{K}$ but $\left(i, j \right) \notin \setE$,
    \item a line that could be switched open or closed.
\end{enumerate}
Therefore, in our work, we will consider the admittances of each line $(i,j) \in \setE_{K}$ to be parameterized functions $y_{ij} : \C \to \C$ of the form 
\begin{equation}
\label{eq:paramterization-of-admittance}
    y_{ij}(\gamma_{ij}) := \begin{cases}
        \gamma_{ij}y_{ij}^{\bullet} & (i,j) \in \setS\\
        y_{ij}^\bullet & (i,j) \in \setE_{K} \setminus \setS\\
        0 & (i,j) \notin \setE
    \end{cases} ,
\end{equation}
where $y_{ij}^\bullet \in \C$ is a chosen nominal admittance value. 

Without loss of generality, we will consider the case where~$\abs{\gamma_{ij}} \leq 1$; in principle, the maximum values of~$\abs{\gamma_{ij}}$ could be larger, such as in the case of controllable smart wires. In the special case where~$\Im(\gamma_{ij}) = 0$ and~$\Re(\gamma_{ij}) \in [0,1]$,  the parameterization~\eqref{eq:paramterization-of-admittance} describes a switching process, where~$\gamma_{ij} = 0 + \uj 0$ if the line is switched open. 

With the above formulation, we can now represent a \emph{controllable admittance matrix} in a vectorized form $\vw : \C^{m+n} \to \C^{n(n+1)/2}$ whose entries correspond to the admittance parameters, including self-admittance. Suppressing the $\vgamma$ argument for convenience, we construct the vector $\vw$ as
    \begin{equation}
    \label{eq:def:all-possible-weights}
     \vw := \begin{bmatrix}
        y_{ij \: : \: (i,j) \in \setE_{K}} & y_{i \: : \: i \in \setN}
    \end{bmatrix}^\T.
\end{equation}
Critically, the vector $\vw$ in \eqref{eq:def:all-possible-weights} has been constructed independently of $\setS$\textemdash it is the vector of \textit{all possible network parameters}, regardless of the network topology. 
It is straightforward for the domain of $\vw$ to be constrained to an appropriate lower-dimensional subspace corresponding to a subset of the ${n \choose 2}$ possible lines under practical reconfiguration and parameter constraints. Using the most general formulation, we obtain the following result.
\begin{lemma}
\label{lemma:special-pf-eqns}
    Let~$\vx \in \C^n$ be a nodal voltage state inducing power injections~$\vs \in \C^n$. There exists a matrix-valued function~$\mF: \C^n \to \C^{n \times n(n-1)/2}$ such that the complex conjugate of the power injections can be written as a parameterized function of all possible admittance weights~$\vw \in \C^{{n \choose 2}}$, namely,
    \begin{equation}
    \label{eq:topology-paramterized-pf-eqns}
        \conj{\vs}(\vx;\vw) = \mF(\vx) \vw,
    \end{equation}
    where $\mF: \C^n \to \C^{n \times n(n-1)/2}$ is defined explicitly in \eqref{eq:special-pf-eq-representation}, and $\vw$ is as in \eqref{eq:def:all-possible-weights}.
\end{lemma}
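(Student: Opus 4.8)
The plan is to reduce the claim to the single observation that the nodal admittance matrix $\mY$ is a \emph{linear} function of the entries of $\vw$, after which the stated factorization follows by a direct rearrangement. First I would take the complex conjugate of the power flow equations $\vs = \diag(\vx)\conj{\mY\vx}$. Since $\diag(\vx)$ is diagonal, conjugation commutes with it, and conjugating the inner factor twice cancels, giving the equivalent form
\begin{equation}
\conj{\vs} = \diag\p{\conj{\vx}}\,\mY\vx .
\end{equation}
With $\vx$ held fixed, the entire $\vw$-dependence of the right-hand side is now carried by the single factor $\mY\vx$.

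Next I would expand $\mY$ over the admittance parameters using the $\pi$-line model depicted above. Each series admittance $y_{ij}$ contributes the fixed ``stamp'' $\p{\ve_i - \ve_j}\p{\ve_i - \ve_j}^\T$ (placing $y_{ij}$ on the diagonal at positions $i,j$ and $-y_{ij}$ on the off-diagonal entries), while each shunt admittance $y_i$ contributes the diagonal stamp $\ve_i\ve_i^\T$, where $\ve_k$ denotes the $k$-th canonical basis vector of $\C^n$. Folding in the parameterization \eqref{eq:paramterization-of-admittance} to account for controllable, fixed, and nonexistent lines, this yields
\begin{equation}
\mY = \sum_{(i,j)\in\setE_{K}} y_{ij}\p{\ve_i - \ve_j}\p{\ve_i - \ve_j}^\T + \sum_{i\in\setN} y_i\,\ve_i\ve_i^\T ,
\end{equation}
which is manifestly linear in the entries of $\vw$, with no constant term.

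I would then substitute this expansion into $\conj{\vs} = \diag\p{\conj{\vx}}\mY\vx$ and pull each scalar weight outside the (now $\vx$-only) matrix products. Reading off the coefficient of each entry of $\vw$ identifies the columns of $\mF(\vx)$: the column associated with line $(i,j)$ is $\diag(\conj{\vx})\p{\ve_i-\ve_j}\p{\ve_i - \ve_j}^\T\vx = \p{x_i - x_j}\p{\conj{x_i}\ve_i - \conj{x_j}\ve_j}$, and the column associated with the shunt at node $i$ is $\diag(\conj{\vx})\ve_i\ve_i^\T\vx = \abs{x_i}^2\ve_i$. Collecting these columns in the same order in which the parameters appear in \eqref{eq:def:all-possible-weights} produces the explicit matrix $\mF(\vx)$ and establishes \eqref{eq:topology-paramterized-pf-eqns}.

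The main obstacle is not the algebra but the bookkeeping. I must fix a consistent ordering so that the columns of $\mF$ are aligned with the entries of $\vw$, collapse the three cases of \eqref{eq:paramterization-of-admittance} (modifiable, fixed, and not-yet-existing lines) into a single linear expression, and keep the shunt (self-admittance) block consistent so that the column count matches the length of $\vw$ given in \eqref{eq:def:all-possible-weights}. A minor point to verify explicitly is that conjugation genuinely commutes with $\diag(\cdot)$ and that the resulting map is truly linear\textemdash rather than merely affine\textemdash in $\vw$, which holds precisely because every entry of $\mY$ is a weighted sum of the parameters with no constant offset.
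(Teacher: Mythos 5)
Your proof is correct and takes essentially the same route as the paper: both arguments rest on the observation that $\mY$ is linear (with no constant term) in $\vw$, and your sum of element stamps $\mY = \sum_{(i,j)\in\setE_{K}} y_{ij}\p{\ve_i-\ve_j}\p{\ve_i-\ve_j}^\T + \sum_{i\in\setN} y_i \ve_i\ve_i^\T$ is exactly the paper's factorization $\mY = \loA_n^\T\diag(\vw)\loA_n$ expanded as a sum of outer products of the rows of $\loA_n$, with your column-by-column readoff of $\mF(\vx)$ playing the role of the paper's key identity $\diag(\vw)\loA_n\vx = \diag(\loA_n\vx)\vw$. The only piece you omit is the paper's second case: networks containing phase-shifting transformers, whose admittance matrices are \emph{not} symmetric and therefore cannot be written with your symmetric stamps $\p{\ve_i-\ve_j}\p{\ve_i-\ve_j}^\T$; the paper covers this by swapping $\loA_n$ for a generalized incidence matrix whose entries carry the transformer phase shifts and repeating the identical computation, and your argument would need the analogous non-symmetric stamp to reach the same generality.
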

\begin{proof}
First we prove the claim for the case where the system has no phase-shifting transformers. 

Note that for any number of buses $n$, there exists a mapping~$\loA_{(\cdot)} : n \mapsto \{-1,0,1\}^{n(n-1)/2 \times n}$ to the unique incidence matrix of a complete undirected graph with a self-edge at every node. Denote the $i$-th standard basis vector in~$\R^d$ as~$\ve^{d}_i \in \R^d$; we then have that~$\loA$ takes the form
\begin{equation}
    \label{eq:def:agnostic-incidence-matrix}
    \begin{split}
    \loA_n &= 
    \begin{bmatrix}
    \vone_{n-1} & -\ve^{n-1}_{1} & -\ve^{n-1}_{2}  & \dots  & -\ve^{n-1}_{n-1}\\
    \vzero_{n-2} & \vone_{n-2} & -\ve^{n-2}_{1} &\ddots & -\ve^{n-2}_{n-2}\\
    \vdots & \ddots & \ddots & \ddots & \vdots \\
    0 & \dots & 0 & 1 & -1 \\
    \hline
     \cdot & \cdot & \mId_{n} & \cdot & \cdot \\
    \end{bmatrix},
    \end{split}
\end{equation}
where $\vone_d$ and $\vzero_d$ denote $d$-dimensional vectors of all ones and zeros, respectively, and $\mId_d$ denotes an identity matrix of dimension $d \times d$. The $\mId_n$ block concatenated at the bottom of \eqref{eq:def:agnostic-incidence-matrix} is due to the presence of self edges at every node. We remark that for \textit{any possible topology weights} $\vw$, we have
\begin{subequations}
\label{eq:special-pf-eq-representation}
    \begin{align}
        \conj{\vs} = \diag(\conj{\vx}) Y\vx &= \diag(\conj{\vx}) \loA_n^\T \diag(\vw) \loA_n \vx\\
        &= \underbrace{\diag(\conj{\vx}) \loA_n^\T \diag\left(\loA_n \vx\right)}_{:= \mF(\vx)} \vw\\
        &:= \mF(\vx) \vw,
    \end{align}
\end{subequations}
which yields the desired representation of the power flow equations. 

The case where the system has phase-shifting transformers is proven in the same way, with the only difference that now $\loA_n$ has the form
\begin{equation}
    \loA_n = \begin{bmatrix}
        \mA(n)\\
        \mId
    \end{bmatrix}.
\end{equation}
where $\mA(n)$ is the \textit{generalized incidence matrix} (see \cite{turizo2023}) of a complete $n$-node graph without self edges, and with phase-shifts chosen to match those of the system transformers.
\end{proof}

    Considering the space of all possible admittance parameters\textemdash as in Lemma \ref{lemma:special-pf-eqns}\textemdash may not always be necessary, except in the most unrestricted planning contexts. In the context of operational reconfiguration, the controllable topological components are typically a small subset of all components in the network. 
It is straightforward to generalize the result of Lemma~\ref{lemma:special-pf-eqns} to handle this setting. Suppose that a subset~$\setS \subseteq \setE$ of possible connections can exist, collecting a total of~$\abs{\setS}$ lines. Introduce a line selection matrix~$\mE \in \c{0,1}^{\abs{\setS} \times {n \choose 2}} $ where~$E_{i,e} = 1$ if~$e = (i,j) \in \setS$, $j>i$. The selection matrix~$\mE$ encodes the edges of the complete graph over which we have control. Reloading the notation, we can redefine the incidence matrix operator $\loA_{\p{\cdot}} : n \mapsto \c{-1,0,1}^{\p{\abs{\setS}+n} \times n}$ as 
\[
\loA_n := \begin{bmatrix}
    \mE\mA(n) \\
    \mId
\end{bmatrix},
\]
and proceed with a computation analogous to \eqref{eq:special-pf-eq-representation}.

\section{Differentiating power flow solutions with respect to admittance parameters}
\label{sec:PF_differentiation}
This section presents the proposed method to derive the sensitivity coefficients of nodal voltage and current phasors with respect to the admittance parameters. We show how they can be derived by implicit differentiation. The derivations for the voltages and the currents are presented as follows.
\subsection{Nodal voltages}
\label{sec:diff-v-wrt-w}
The primary assumption upon which the analytical results are based is stated below.
\begin{assumption}
    \label{assum:no-voltage-collapse}
    Let $(\vv^\bullet_\setP,\vdelta^\bullet) \in \R^{(n-g)}_+ \times (-\pi,\pi]^n$ and $\vx^\bullet \in \C^n$ be a solution to the power flow equations in polar and rectangular coordinates, respectively, where  $\vx^\bullet := \vv(\vv^\bullet_\setP) \circ \exp(\uj \vdelta^\bullet)$. The corresponding power flow Jacobian is assumed to be non-singular at such an operating point.
\end{assumption}
Importantly, Assumption~\ref{assum:no-voltage-collapse} implies that network is not in a state of voltage collapse. This is not a restrictive assumption, as power systems usually operate far from this condition; however, such states can occur, see~\cite{cutsem_reactive_power_voltage_collapse_1991,hiskens_exploring_2001,simpson-porco_voltage_2016} for further exposition on this topic.

\begin{theorem}
\label{thm:dvdw}
    Consider a power system with admittance parameters~$\vw^\bullet := \vg^\bullet + \uj \vb^\bullet$ and a corresponding power flow solution~$(\vv^\bullet_\setP,\vdelta^\bullet)$. If Assumption~\ref{assum:no-voltage-collapse} holds, the Jacobian of the power flow solution with respect to the network admittance parameters~$\vw := \vg + \uj \vb$ is well-defined in a neighborhood of~$(\vg^\bullet, \vb^\bullet)$. 
\end{theorem}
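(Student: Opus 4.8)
The plan is to obtain the result as a direct application of the Implicit Function Theorem (IFT) to the real-valued power flow mismatch map, treating the unknown state $\p{\vv_\setP, \vdelta}$ as an implicit function of the admittance parameters. First I would assemble the residual $\vh\p{\vv_\setP, \vdelta; \vg, \vb}$ by stacking the active-power mismatches at the non-slack buses and the reactive-power mismatches at the buses in $\setP$. By Lemma~\ref{lemma:special-pf-eqns} these are precisely the real and imaginary parts of $\mF(\vx)\vw - \conj{\vs}^{\mathrm{spec}}$ restricted to the relevant buses, where $\vx = \vv(\vv_\setP)\circ\exp(\uj\vdelta)$ and the specified injections are fixed constants. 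Writing $\vz := \p{\vv_\setP, \vdelta}$ for brevity, the system $\vh(\vz; \vg, \vb) = \vzero$ is square, and by hypothesis $\vh\p{\vz^\bullet; \vg^\bullet, \vb^\bullet} = \vzero$ because $\p{\vv_\setP^\bullet, \vdelta^\bullet}$ solves the power flow equations at $\vw^\bullet$.

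Next I would verify the regularity hypotheses. The map $\vh$ is smooth—indeed real-analytic—in all of its arguments: by Lemma~\ref{lemma:special-pf-eqns} it is \emph{linear} in $\vw = \vg + \uj\vb$, while its dependence on $\p{\vv_\setP, \vdelta}$ enters only through the entire functions $v_i\cos\delta_i$ and $v_i\sin\delta_i$ via $\vx$. Hence $\vh \in C^1$ on a neighborhood of the operating point, and the partial Jacobians with respect to both the state $\vz$ and the parameters exist and are continuous. Because the mismatch equations are real while $\vw$ is complex, I treat $\vg$ and $\vb$ as the real parameters and apply the IFT over $\R$; if the complex (Wirtinger) derivatives $\partial/\partial\vw$ and $\partial/\partial\conj{\vw}$ are ultimately desired, they are recovered from the real partials through the relations~\eqref{eq:wirtinger-from-components}.

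The key step is to identify the state-partial block: differentiating $\vh$ with respect to $\p{\vv_\setP, \vdelta}$ produces \emph{exactly} the canonical power flow Jacobian, i.e., the matrix of partial derivatives of the injected active and reactive powers with respect to the PQ-bus voltage magnitudes and the bus angles (with $\vv_\setV$ held fixed and hence excluded from the differentiation). Assumption~\ref{assum:no-voltage-collapse} states precisely that this matrix is nonsingular at $\p{\vv_\setP^\bullet, \vdelta^\bullet}$. With all three IFT hypotheses met—$\vh(\vz^\bullet; \vw^\bullet)=\vzero$, $\vh \in C^1$, and $\partial\vh/\partial\vz$ invertible at the operating point—the theorem yields a neighborhood of $\p{\vg^\bullet, \vb^\bullet}$ on which there is a unique $C^1$ map $\vw \mapsto \p{\vv_\setP(\vw), \vdelta(\vw)}$ solving $\vh = \vzero$ and agreeing with the given solution at $\vw^\bullet$, whose derivative is
\[
\frac{\partial \p{\vv_\setP, \vdelta}}{\partial \vw} = -\p{\frac{\partial \vh}{\partial \p{\vv_\setP, \vdelta}}}^{-1} \frac{\partial \vh}{\partial \vw}.
\]
This is the asserted well-defined sensitivity. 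The only genuine obstacle is the bookkeeping that confirms the state-partial block coincides with the standard power flow Jacobian—so that Assumption~\ref{assum:no-voltage-collapse} applies verbatim—and that the specified entries $\vv_\setV$ and the slack angle are correctly held fixed; the remaining hypotheses are routine given the linearity in $\vw$ afforded by Lemma~\ref{lemma:special-pf-eqns}.
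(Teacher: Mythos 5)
Your proposal is correct and takes essentially the same route as the paper: the paper's proof likewise uses Lemma~\ref{lemma:special-pf-eqns} to stack the real and imaginary power mismatches into a real-valued map $\vkappa(\vv_\setP,\vdelta;\vg,\vb)$, identifies the state-partial block $\partial\vkappa/\partial(\vv_\setP,\vdelta)$ with the standard polar power flow Jacobian $\mJ(\vv_\setP^\bullet,\vdelta^\bullet)$, and invokes the real Implicit Function Theorem under Assumption~\ref{assum:no-voltage-collapse} to get the unique $C^1$ solution map and the sensitivity formula $-\p{\partial\vkappa}^{-1}\s{\partial\vkappa/\partial\vg \;\; \partial\vkappa/\partial\vb}$. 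If anything, your bookkeeping is slightly more careful than the paper's\textemdash you drop the slack active-power mismatch and hold the slack angle fixed so the state Jacobian is exactly the (reduced, generically nonsingular) standard one, whereas the paper's $\vkappa$ formally retains all $n$ active-power rows and all $n$ angle columns\textemdash and your explicit $C^1$ check and Wirtinger-derivative recovery mirror the paper's surrounding discussion.
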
 
\begin{proof}
Denote the real and imaginary components of the vectorized admittance weights as $ \vg := \Re\{\vw\}$ and $ \vb := \Im\{ \vw\}$. Then, by applying Lemma \ref{lemma:special-pf-eqns}, we can construct the topology-parameterized power flow equations \eqref{eq:topology-paramterized-pf-eqns} in terms of polar coordinates for the state as the function 
\begin{equation}
    \conj{\vs} := \mF(\vv_\setP,\vdelta)\vw := \mF(\vv(\vv_\setP) \circ \exp(\uj \vdelta)) (\vg + \uj \vb).
\end{equation}
There is a mapping $\vkappa : \R^n \times \R^n \times \R^{n(n-1)/2} \times \R^{n(n-1)/2} \to \R^{(2n-g)}$ parameterized by the network topology parameters $\vg,\vb$ with entries of the form
\begin{align*}
    &(\vkappa(\vv_\setP,\vdelta;\vg,\vb))_k = \nonumber \\ &\hspace*{5em}\begin{cases}
    \left(\Re\{\mF(\vv_\setP,\vdelta)(\vg + \uj \vb)\} - \vp^\bullet\right)_k & k \leq n \\
    \left(\Im\{\mF(\vv_\setP,\vdelta)(\vg + \uj \vb)\} + \vq^\bullet\right)_{\setP(k-n)} & k > n \label{eq:fixed-point} \\
    \end{cases},
\end{align*} 
such that~$\vkappa(\vv^\bullet_\setP,\vdelta^\bullet;\vg^\bullet,\vb^\bullet) = \vzero_{2n-g}$. Let~$\partial \vkappa$ denote the derivative of~$\vkappa$ with respect to the power system state at~$(\vv^\bullet_\setP,\vdelta^\bullet;\vg^\bullet,\vb^\bullet)$, then
\begin{equation}
    \partial \vkappa = \begin{bmatrix}
    \frac{\partial \vkappa}{\partial \vdelta} &  \frac{\partial \vkappa}{\partial \vv_\setP} 
\end{bmatrix} = \mJ(\vv^\bullet_\setP,\vdelta^\bullet),
\end{equation}
where $\mJ(\vv_\setP,\vdelta)$ is the standard power flow Jacobian matrix in polar coordinates. If Assumption \ref{assum:no-voltage-collapse} holds, then $\mJ(\vv^\bullet_\setP,\vdelta^\bullet)$ is non-singular, so $\partial \vkappa$ is non-singular as well. By the Implicit Function Theorem, there exists unique, continuously differentiable functions $\vv_\setP(\vg,\vb)$ and $\vdelta(\vg,\vb)$ defined in a neighborhood of $(\vg^\bullet,\vb^\bullet)$ such that $\vv_\setP(\vg^\bullet,\vb^\bullet) = \vv_\setP^\bullet$, $\vdelta(\vg^\bullet,\vb^\bullet) = \vdelta^\bullet$, and $\vkappa(\vv_\setP(\vg,\vb),\vdelta(\vg,\vb);\vg,\vb) = \vzero_{2n-g}$ for all $(\vg, \vb)$ in the neighborhood. Moreover, the derivatives of these functions satisfy
\begin{subequations}
\begin{align}
     \begin{bmatrix}
        \frac{\partial \vdelta}{\partial \vg} & \frac{\partial \vdelta}{\partial \vb}\\
        \frac{\partial \vv_\setP}{\partial \vg} & \frac{\partial \vv_\setP}{\partial \vb}
    \end{bmatrix} &= -(\partial \vkappa)^{-1}\begin{bmatrix}
        \frac{\partial \vkappa}{\partial \vg} & \frac{\partial \vkappa}{\partial \vb}
    \end{bmatrix} \\
    &= -\mJ(\vv^\bullet_\setP,\vdelta^\bullet)^{-1}\begin{bmatrix}
        \frac{\partial \vkappa}{\partial \vg} & \frac{\partial \vkappa}{\partial \vb}
    \end{bmatrix}, \label{eq:implicit-function-theorem-sol}
\end{align}
\end{subequations}
so by construction the claim holds.
\end{proof}
In the expression \eqref{eq:implicit-function-theorem-sol}, the Jacobian matrices taken with respect to the admittance parameters can be analytically computed from \eqref{eq:special-pf-eq-representation}; see Appendix \ref{apdx:diff-power-to-params}. Furthermore, from \eqref{eq:wirtinger-from-components} we can compute the Wirtinger derivatives of the power flow solution with respect to the network parameters $\vw$ as
\begin{subequations}
\begin{align}
    \frac{\partial \vdelta}{\partial \vw} & = \frac{1}{2} \left(\frac{\partial \vdelta}{\partial \vg} - \uj \frac{\partial \vdelta}{\partial \vb}\right), \\
    \frac{\partial \vdelta}{\partial \conj{\vw}} & = \frac{1}{2} \left(\frac{\partial \vdelta}{\partial \vg} + \uj \frac{\partial \vdelta}{\partial \vb}\right), \\
    \frac{\partial \vv_\setP}{\partial \vw} & = \frac{1}{2} \left(\frac{\partial \vv_\setP}{\partial \vg} - \uj \frac{\partial \vv_\setP}{\partial \vb}\right), \\
    \frac{\partial \vv_\setP}{\partial \conj{\vw}} & = \frac{1}{2} \left(\frac{\partial \vv_\setP}{\partial \vg} + \uj \frac{\partial \vv_\setP}{\partial \vb}\right).
\end{align}
\end{subequations}

Theorem \ref{thm:dvdw} shows that the expressions \eqref{eq:p-to-w-derivatives} and \eqref{eq:q-to-w-derivatives}, coupled with the power flow Jacobian $\mJ(\cdot,\cdot)$, provide all information needed to recover the voltage-admittance sensitivities by solving the system \eqref{eq:implicit-function-theorem-sol} for a given operating point.

\subsection{Line current flows}
\label{sec:diff-curr-wrt-w}

Now, we establish how to differentiate the line currents with respect to the admittance parameters. 

\subsubsection{Shuntless networks}
When there are no shunts in the network, by Ohm's law, the line current phasors are 
\begin{equation}
    u_{ij} = (g_{ij} + \uj b_{ij})(x_i - x_j), \quad \forall \;(i,j) \in \setE.
\end{equation}
Now, define the voltage phasor difference~$\Delta_{ij} := x_i - x_j \in \C^n$ across a line~$(i,j) \in \setE$. Recall that for any continuously differentiable function~$\vf : \R^n \to \C^m$, we have that~$\frac{\partial}{\partial \vx}\vf(\vx)  = \frac{\partial}{\partial \vx} \Re\{\vf(\vx)\} + \uj \frac{\partial}{\partial \vx} \Im\{\vf(\vx)\}$. Consequently, by the above identities and the chain and product rules, the derivatives of the line currents with respect to the network conductance parameters are
\begin{equation}
\label{eq:dldg}
    \frac{\partial u_{ij}}{\partial g_{kl}} = \begin{cases}
        \frac{\partial \Delta_{ij}}{\partial g_{ij}}g_{ij} + \Delta_{ij} + \uj\, \frac{\partial \Delta_{ij}}{\partial g_{ij}} b_{ij} & (i,j)=(k,l)\\
        \frac{\partial \Delta_{ij}}{\partial g_{kl}}g_{ij} + \uj\, \frac{\partial \Delta_{ij}}{\partial g_{kl}} b_{ij}  & \ow
    \end{cases}
\end{equation}
and similarly, the derivatives with respect to the susceptance parameters are 
\begin{equation}
\label{eq:dldb}
    \frac{\partial u_{ij}}{\partial b_{kl}} = \begin{cases}
        \frac{\partial \Delta_{ij}}{\partial b_{ij}}g_{ij}  + \uj\left( \frac{\partial \Delta_{ij}}{\partial b_{ij}} b_{ij} + \Delta_{ij} \right) & (i,j)=(k,l)\\
        \frac{\partial \Delta_{ij}}{\partial b_{kl}}g_{ij} + \uj \frac{\partial \Delta_{ij}}{\partial b_{kl}} b_{ij} & \ow
    \end{cases}
\end{equation}
for all~$(i,j)\in \setE_{K}, i \neq j$ and~$(k,l) \in \setE_{K}, k \neq l$. Notice that~$u_{ij} = -u_{ji}$, so we only need to compute sensitivities for currents in one direction.

\subsubsection{Generalization to networks with shunts}
In the case where there are shunts in the network, let~$\vu^{\sf br} \in \C^{n(n-1)/2}$ and~$\vu^{\sf sh} \in \C^n$ denote the branch and shunt current phasors in rectangular coordinates, respectively. Let~$\vu \in \C^{n(n-1)/2}$ be the concatenation of these two vectors. The branch and shunt currents can be computed from \eqref{eq:special-pf-eq-representation} as
\begin{subequations}
 \label{eq:thm-line-curr-with-shunts}
\begin{align}
    \vu :=\begin{bmatrix}
        \vu^{\sf br}\\
        \vu^{\sf sh}
    \end{bmatrix} &= \diag(\vw) \loA_n \vx\\
    &= \diag(\vw)\begin{bmatrix}
        \mA(n)\\
        \mId
    \end{bmatrix} \vx \\
    &= \diag\left(\begin{bmatrix}
        \mA(n)\\
        \mId
    \end{bmatrix} \vx\right) \vw,
\end{align}
\end{subequations}
where $\mA(n)$ is the generalized incidence matrix of a complete graph with $n$ nodes and phase shifts matching that of the system. Thus, differentiating, we obtain the block matrix
\begin{equation}
    \frac{\partial \vu}{\partial \vw} = \begin{bmatrix}
        \dfrac{\partial \vu^{\sf br}}{\partial \vw} &
        \dfrac{\partial \vu^{\sf sh}}{\partial \vw}
    \end{bmatrix}^{\top},
\end{equation}
where the blocks 
\begin{subequations}
\label{eq:current_sensitivities}
\begin{align}
\frac{\partial \vu^{\sf br}}{\partial \vw} &= \diag\left(\mA(n) \vx\right) + \diag(\vw^{\sf br})\mA(n) \frac{\partial \vx}{\partial \vw}, \\
\frac{\partial \vu^{\sf sh}}{\partial \vw} &= \diag\left(\vx\right) + \diag(\vw^{\sf sh}) \frac{\partial \vx}{\partial \vw},
\end{align}
\end{subequations}
where $\vw^{\sf br}$ is the vector formed by the first $n(n-1)/2$ entries of $\vw$ (the branch weights) and $\vw^{\sf sh}$ is the vector formed by the last $n$ entries of $\vw$ (the shunt weights). Similarly, differentiating with respect to the complex conjugate of the admittance vector, we arrive at
\begin{equation}
    \frac{\partial \vu}{\partial \conj{\vw}} = \begin{bmatrix}
        \dfrac{\partial \vu^{\sf br}}{\partial \conj{\vw}} &
        \dfrac{\partial \vu^{\sf sh}}{\partial \conj{\vw}}
    \end{bmatrix}^{\top},
\end{equation}
where the blocks are computed as
\begin{subequations}
\begin{align}
\frac{\partial \vu^{\sf br}}{\partial \conj{\vw}} &= \diag(\vw^{\sf br})\mA(n) \frac{\partial \vx}{\partial \conj{\vw}}, \\
\frac{\partial \vu^{\sf sh}}{\partial \conj{\vw}} &= \diag(\vw^{\sf sh}) \frac{\partial \vx}{\partial \conj{\vw}}.
\end{align}
\end{subequations}

Lastly, we derive the sensitivities of the current flow magnitudes~$\vell = |\vu|$. To this end, observe that
\begin{equation}
    \vell = \left(\diag(\vu) \conj{\vu}\right)^{1/2},
\end{equation}
so the sensitivities are computed as
\begin{subequations}
\label{eq:current_magnitude_sens}
\small
\begin{align}
\frac{\partial \vell}{\partial \vw} &= \frac{1}{2} \diag(\vell)^{-1} \left(\diag(\conj{\vu})\frac{\partial \vu}{\partial \vw} + \diag(\vu) \conj{\left(\frac{\partial \vu}{\partial \conj{\vw}}\right)}\right), \\
\frac{\partial \vell}{\partial \conj{\vw}} &= \frac{1}{2} \diag(\vell)^{-1} \left(\diag(\conj{\vu})\frac{\partial \vu}{\partial \conj{\vw}} + \diag(\vu) \conj{\left(\frac{\partial \vu}{\partial \vw}\right)}\right).
\end{align}
\end{subequations}

\subsection{Line power flows}
\label{sec:diff-powerflow-wrt-w}
In Sections~\ref{sec:diff_powerflow_wrt_w_polar} and~\ref{sec:diff_powerflow_wrt_w_rectangular}, the derivatives of the line power flows with respect to arbitrary controllable admittances are computed for lines \textit{without} shunts, both polar and rectangular form, respectively. Subsequently, we will generalize the result to handle networks with shunts. Then, in Section \ref{sec:diff_powerflow_wrt_w_shunts}, the results are generalized to lines \textit{with} shunt admittances.

\subsubsection{Polar form}
\label{sec:diff_powerflow_wrt_w_polar}
Let $w_{ij} := \abs{w_{ij}} \exp \left(\uj \phi_{ij}\right)$ denote the polar form of the admittance weight for an arbitrary line $(i,j) \in \setE_{K}$, where $\abs{w_{ij}} = \sqrt{g_{ij}^2 + b_{ij}^2}$ and $\phi_{ij} = \operatorname{atan2}\p{b_{ij}, g_{ij}}$. The active and reactive power flows across the line are
\begin{subequations}
    \label{eq:complex_power_flow}
    \begin{align}
        p_{ij} &:= v_i \abs{w_{ij}} \p{ v_i \cos\p{\phi_{ij}} - v_j \cos\p{\delta_{ij} - \phi_{ij} } }, \\
        q_{ij} &:= - v_i \abs{w_{ij}} \p{ v_i \sin\p{ \phi_{ij}} + v_j \sin \p{\delta_{ij} - \phi_{ij} } }.
    \end{align}
\end{subequations}

\subsubsection{Rectangular form}
\label{sec:diff_powerflow_wrt_w_rectangular}
The rectangular form follows immediately from the previous line current flow sensitivity derivation. Recall that grid physics requires that the line current flows are given in rectangular coordinates as
\[
s_{ij} := x_i \conj{u_{ij}} \quad \forall (i,j) \in \setE_{K}, \, i \neq j.
\]

Hence, the sensitivities of the complex flow $s_{ij} \in \C$ through line $(i,j)$ with respect to the conductance and susceptance weights of an arbitrary line $(k,l) \in \setE_{k}, \, k \neq l$ are
\begin{subequations}
\begin{align}
    \frac{\partial s_{ij}}{\partial g_{kl}} &= \frac{\partial x_i}{\partial g_{kl}} \conj{u_{ij}} + x_i \conj{\left(\frac{\partial u_{ij}}{\partial g_{kl}}\right)}, \\
    \frac{\partial s_{ij}}{\partial b_{kl}} &= \frac{\partial x_i}{\partial b_{kl}} \conj{u_{ij}} + x_i \conj{\left(\frac{\partial u_{ij}}{\partial b_{kl}}\right)},
\end{align}
\end{subequations}
respectively. Critically, notice that $s_{ij}$ may not necessarily be equal to $s_{ji}$, so we need to compute sensitivities for power flows \textit{in both directions}.

\subsubsection{Generalization to networks with shunts }
\label{sec:diff_powerflow_wrt_w_shunts}
Next we consider the case where there are shunts in the network. Let~$\vs^{\sf brd} \in \C^{n(n-1)/2}$ be the branch complex power flows, in the direction implied by~$\mA(n)$. We also define~$\vs^{\sf sh} \in \C^n$ denote the shunt complex power flows. Lastly, we let~$\vs^{\sf brr} \in \C^{n(n-1)/2}$ be the branch complex power flows in the reverse direction. Let~$\vs^{\sf fl} \in \C^{n^2}$ be the concatenation of these three vectors, then the branch and shunt currents can be calculated from~\eqref{eq:special-pf-eq-representation} as
\begin{equation}
\label{eq:line-flows-with-shunts}
    \vs^{\sf fl} :=\begin{bmatrix}
        \vs^{\sf brd}\\
        \vs^{\sf sh}\\
        \vs^{\sf brr}
    \end{bmatrix} = \diag(\loC_n \vx) \conj{\begin{bmatrix}
        \vu\\
        -\vu^{\sf br}
    \end{bmatrix}}.
\end{equation}
The matrix $\loC_n \in \R^{n^2 \times n}$ is defined as
\begin{equation}
    \label{eq:def:c-matrix}
    \loC_n := 
    \begin{bmatrix}
    \vone_{n-1} & \vzero_{n-1} & \vzero_{n-1}  & \dots  & \vzero_{n-1}\\
    \vzero_{n-2} & \vone_{n-2} & \vzero_{n-1} &\ddots & \vzero_{n-1}\\
    \vdots & \ddots & \ddots & \ddots & \vdots \\
    0 & \dots & 0 & 1 & 0 \\
    \hline
    \cdot & \cdot & \mId_{n} & \cdot & \cdot \\
    \hline
    \vzero_{n-1} & \ve^{n-1}_{1} & \ve^{n-1}_{2}  & \dots  & \ve^{n-1}_{n-1}\\
    \vzero_{n-2} & \vzero_{n-2} & \ve^{n-2}_{1} &\ddots & \ve^{n-2}_{n-2}\\
    \vdots & \ddots & \ddots & \ddots & \vdots \\
    0 & \dots & 0 & 0 & 1
    \end{bmatrix},
\end{equation}
where $\vone_d,\vzero_d$ denote $d$-dimensional vectors of all ones and zeros, respectively, and $\ve^{d}_i \in \R^d$ denotes the $i$-th standard basis vector of $\R^d$. Differentiating these expressions yields
\begin{subequations}
\label{eq:line_flow_sensitivities}
\small
\begin{align}
\frac{\partial \vs^{\sf fl}}{\partial \vw} &= \diag\left(\begin{bmatrix}
        \conj{\vu}\\
        -\conj{\vu}^{\sf br}
    \end{bmatrix}\right) \loC_n \frac{\partial \vx}{\partial \vw} + \diag(\loC_n \vx) \conj{\begin{bmatrix}
        \frac{\partial \vu}{\partial \conj{\vw}}\\
        -\frac{\partial \vu^{\sf br}}{\partial \conj{\vw}}
    \end{bmatrix}}, \\
\frac{\partial \vs^{\sf fl}}{\partial \conj{\vw}} &= \diag\left(\begin{bmatrix}
        \conj{\vu}\\
        -\conj{\vu}^{\sf br}
    \end{bmatrix}\right) \loC_n \frac{\partial \vx}{\partial \conj{\vw}} + \diag(\loC_n \vx) \conj{\begin{bmatrix}
        \frac{\partial \vu}{\partial \vw}\\
        -\frac{\partial \vu^{\sf br}}{\partial \vw}
    \end{bmatrix}}.
\end{align}
\end{subequations}

\section{Application of Admittance Sensitivities}
\label{sec:application}
In this section, we illustrate two distinct applications of the proposed power flow linearization technique. The first is on estimating the power flow solutions with change in topology without the need of solving the power flow equations directly. The second consists of controlling the admittance parameters continuously such as line impedances for voltage regulation.
\subsection*{Predicting power flow solution with a change in topology}
\label{sec:application-linearization}
The voltage-admittance sensitivity matrices allow us to linearize a power flow solution around nominal admittance parameters $\vg^\bullet$ and $\vb^\bullet$. For example, we can form functions that approximate the voltage magnitude component of a power flow solution $\vvhat$ and the corresponding branch current magnitudes $\vellhat$ as a function of the topology parameters, where
\begin{equation}
\label{eq:sensitivity-definition}
         \begin{bmatrix}
             \vvhat(\vg,\vb)\\
             \vellhat(\vg,\vb)
         \end{bmatrix} = \begin{bmatrix}
             \vv^\bullet_\setP \\
             \vell^\bullet
         \end{bmatrix}+ 
     \begin{bmatrix}
             \mK^v_g & \mK^v_b\\
             \mK^\ell_g & \mK^\ell_b\\
        \end{bmatrix}
    \begin{bmatrix}
        \vg - \vg^\bullet\\
        \vb - \vb^\bullet
    \end{bmatrix}.
\end{equation}
where~$\mK^v_g = \partial \vv_\setP / \partial \vg$,~$\mK^v_b = \partial \vv_\setP / \partial \vb$,~$\mK^\ell_g=\partial \ell / \partial \vg$, and~$\mK^\ell_b = \partial \ell / \partial \vb$. For the line flows, we may consider only the flows of lines in $\setE$ instead of $\setE_{K}$. These coefficients are computed using \eqref{eq:implicit-function-theorem-sol} and \eqref{eq:current_sensitivities} for a pre-defined operating point around which power flow equations are linearized. Later in the numerical experiments in Section~\ref{sec:experiments}, we compare the estimation performance with different operating points. 

\subsection*{Admittance control}
\label{sec:application-continuous-admittance} 
Continuous admittance control has been proposed in the existing literature such as \emph{smart wires}, which allow for induced changes in line admittances~\cite{jinsiwale_decentralized_2020} and shunt reactance control~\cite{peelo1996new, xiao2003available, sahraei2015day}. Below, we show a formulation using the proposed power flow linearization for carrying out admittance control for voltage regulation. 
\subsubsection*{Voltage regulation with variable admittance}
\label{sec:application-vreg}
A simple application of the proposed framework is in allowing admittance parameters to directly become a decision variable in linearized voltage regulation problems. Traditionally, this is solved via heuristic approaches; in FACTS problems, the controllable admittance is modeled as the reactive power injection at a PQ bus, i.e., $Q_i = v_i^2Y_{ij}^{\sf sh}$.

We demonstrate the practical application of these coefficients via a simple voltage control problem. In particular, applying the linearization approach from Section \ref{sec:application-linearization}, we can construct a linear approximation of the voltage magnitudes and currents around both active and reactive power injections via the matrix
\begin{equation}
   \mK := \begin{bmatrix}
            \mK^v_p & \mK^v_q & \mK^v_g & \mK^v_b\\
            \mK^\ell_p & \mK^\ell_q & \mK^\ell_g & \mK^\ell_b\\
        \end{bmatrix}.
\end{equation}
Define the overall state of the network as 
\begin{equation}
    \vz := \begin{bmatrix}
        \vp^\T & \vq^\T_\setP & \vg^\T & \vb^\T
    \end{bmatrix}^\T,
\end{equation}
and let $\vz^\bullet$ be the nominal values of these variables. Define their lower and upper bounds as
\begin{equation}
    \vz^{\sf min} := \begin{bmatrix}
            \vp^{\sf min}\\
            \vq^{\sf min}_\setP\\
            \vg^{\sf min}\\
            \vb^{\sf min}
        \end{bmatrix}, \quad 
    \vz^{\sf max} :=  \begin{bmatrix}
            \vp^{\sf max}\\
            \vq^{\sf max}_\setP\\
            \vg^{\sf max}\\
            \vb^{\sf max}
        \end{bmatrix}.
\end{equation}
Let $\vy :=  [\vv^\T, \vell^\T]^\T \in \R^{n+m}$ be the approximated solutions, where $\vy := \vy^{\bullet} + \mK(\vz - \vz^{\bullet})$. The lower and upper bounds of the output are
\begin{equation}
    \vy^{\sf min} := \begin{bmatrix}
            \vv^{\sf min}\\
            \vell^{\sf min}
        \end{bmatrix}, \quad 
    \vy^{\sf max} :=  \begin{bmatrix}
            \vv^{\sf max}\\
            \vell^{\sf max}
        \end{bmatrix}.
\end{equation}

Finally, set~$\vgamma \in [\gamma^{\sf min},\gamma^{\sf max}]^m$, where~$\gamma^{\sf min},\gamma^{\sf max} \in [0,1]$, with~$\gamma^{\sf min} < \gamma^{\sf max}$. We can then form an approximated voltage regulation program where the goal is to minimize active power curtailment with controllable reactive power injections and network admittances. This program can be written as
\begin{subequations}
\label{eq:linear-program-vreg}
    \begin{align}
        \max_{\vz,\vgamma} \quad &\vone^\T \vp\\
        \st \quad  &\vy^{\sf min} \leqslant \vy^{\bullet} + \mK( \vz - \vz^{\bullet}) \leqslant \vy^{\sf max}\\
        &\vz^{\sf min} \leqslant \vz \leqslant \vz^{\sf max}\\
        &\gamma^{\sf min} \vone_m \leqslant \vgamma \leqslant \gamma^{\sf max} \vone_m\\
        &\vg = \diag(\vgamma) \vg^{\bullet}\\
        &\vb = \diag(\vgamma) \vb^{\bullet},
    \end{align}
\end{subequations}
which is a linear program. Section~\ref{sec:voltage_regulation_simulation} shows how continuous admittance control can be used for voltage regulation with the formulation in \eqref{eq:linear-program-vreg}. We also demonstrate its use for increasing the hosting capacity of a sample distribution network.

\subsubsection*{Discrete topology control for congestion management}
\label{sec:discrete_top_ctrl}
%
\newcommand{\vztilde}{\vct{\tilde{z}}}
Admittance sensitivities also enable fast discrete line switching algorithms, where we optimize over a vector of binary \textit{switching variables} $\vz \in \c{0,1}^m$, where $z_{ij} =1$ (resp. $z_{ij}=0$) indicates that the line $(i,j)\in\mathcal{E}$
is closed (resp. switched open). In this section, we illustrate an efficient way to optimize topologies using the framework of the present paper (Alg. \ref{alg:switchfast}), which we present below.

In many applications involving congestion management, the objective of a grid planning or topology control problem can be written as a parameterized function of the line admittances, or modifications thereof, i.e., switching statuses. As a simple example, consider the problem of minimizing the sum of the flow-to-capacity ratios of each line~$c_{ij}(\vx;\vz) := |u_{ij}\p{\vx(\vz)}|^{2}/(u_{ij}^{\sf \max})^{2}$, where~$\vx(\vz)$ is a power flow solution which depends on~$\vz$, see~\cite{babaeinejadsarookolaee_congestion_2023} for more details on this metric. Seeking to minimize this congestion metric yields a program of the form
\begin{subequations}\label{eq:discrete-topology}
\begin{align}
    \min_{\vz \in \c{0,1}^m,\, \vx\in\C^n}\;
        & \sum_{(i,j)\in\mathcal{E}} z_{ij}\,c_{ij}(\vx;
        \vz) =: C(\vx;\vz)\label{eq:discrete:obj}\\
    \st \quad
        &\vz \in \setZ, \label{eq:cons:K_sw}\\
        &\mathsf{power\ flow\ equations}, \label{eq:cons:acpf}\\
        &\mathsf{engineering\ constraints,}  \label{eq:eng_constr}
\end{align}
\end{subequations}
where constraint \eqref{eq:cons:K_sw} represents switching constraints, which are not included in~\eqref{eq:eng_constr}. For example, one simple choice is the set~$\setZ=\c{\vz^\T\vone \;\le\; K}$, which represents a budget of at most~$K$ switches being switched closed. The program~\eqref{eq:discrete-topology} is a challenging mixed-integer non-linear program due to the binary variables and the AC equations embedded in $c_{ij}(\cdot)$ through the constraint \eqref{eq:cons:acpf}. 

\subsubsection*{Continuous relaxation and rounding}
\label{sec:cont-relax-linear}
Let $\vgamma\in[0,1]^m$ be the relaxed switch vector. Using the current sensitivity results from Section~\ref{sec:diff-curr-wrt-w}, the gradient of the squared current magnitude through each line~$l$ w.r.t. the relaxed switching variables is
\begin{equation}
\label{eq:current_grad}
\mathbf{\mathfrak{g}} := \nabla_{\vgamma} \abs{u_l(\vgamma)}^2 = \frac{\partial}{\partial \vgamma} \abs{u_l}^2 = 2\cdot\abs{u_l(\vgamma)} \vk_l,
\end{equation}
where~$(\vk_l)_e =\partial \abs{u_l}/\partial\gamma_e$ are the sensitivities of the current magnitude through line~$l$ w.r.t. the admittance parameter~$\gamma_e$. 

In particular, in the simple case where~$\setZ = \c{\vz : \vz^\T\vone \leq K}$, we can use the gradients~\eqref{eq:current_grad} to approximate the objective of the program~\eqref{eq:discrete-topology}, which results in the linear program
\[
\label{eq:knapsack}
\begin{aligned}
  \min_{\vzero\le\vgamma\le\vone}\quad &\vc^{\top}\vgamma \\
  \text{s.t.}\quad & \mathbf 1^{\top}\vgamma\le K,
\end{aligned}
\]
where~$\vc:=[c_1,\dots,c_m]^{\top}$ contains the coefficients of a first-order Taylor series approximation of the objective function in~\eqref{eq:discrete:obj} as a function of~$\vgamma$. 

In some cases, it may be tractable to solve the relaxed problem deterministically and ensure integrality; one such example is the aforementioned case of the knapsack constraint. This approximated problem can be solved to optimality with~Alg.~\ref{alg:switchfast}, followed by a tractable AC feasibility step which, when satisfied, ensures the algorithm returns a feasible solution to the program~\eqref{eq:discrete-topology}.
\begin{algorithm}[ht]
\DontPrintSemicolon
\KwIn{Network data, budget constraint, initial config.}
\KwOut{Feasible solution for~\eqref{eq:discrete-topology}}
\BlankLine
\SetKwBlock{Begin}{function}{end function}
\Begin($\textsc{QuickSwitch} {(} \setG,\setZ,\vz^\bullet {)}$)
{
    $\vx_\star \gets \mathsf{ACPF}(\vz^\bullet)$ \;

    \While{$\vx_\star \ \mathsf{infeasible}$}{
    $\boldsymbol{\mathfrak{g}} \gets \nabla_{\vgamma} C\p{\vx_\star}$ \tcp*{implicit diff.}
    $\vz_\star\leftarrow\arg\min_{\vz\in\setZ}
                \ip{\boldsymbol{\mathfrak{g}}}{\vz}$ \tcp*{lin. approx.}   
    $\vx_\star \gets \mathsf{ACPF}(\vz_\star)$\;
    }
    \Return{$\vz_\star,\vx_\star$}\;
}
\caption{Greedy AC knapsack switching algorithm}
\label{alg:switchfast}
\end{algorithm}

In cases where the switching constraint set~$\setZ$ remains this intractable, it is possible to interpret each optimized weight~$\gamma_e^{\star}$ as a Bernoulli parameter; in which case, we can draw a random switch vector~$\tilde{\vz}$ with~$z_{ij}\sim\mathsf{Ber}(\gamma^{\star}_{ij})$, evaluate constraint violations, and update lines with gradient information until feasibility is restored. This is essentially randomized rounding with a feasibility step; Alg.~\ref{alg:switchfast} can be augmented accordingly.

\section{Numerical Experiments}
\label{sec:experiments}

Our first numerical illustration will use a simple 5-bus network with two possible radial topologies, shown in the diagram provided in Fig. \ref{fig:simple-network}. The test case is based on a subnetwork of the Baran and Wu 33-bus test case \cite{baran-reconfiguration-1989}. 
The two possible radial topologies of this network are used to demonstrate that the proposed differentiation technique can be used to linearize around network admittance changes. 


\begin{figure}[!h]
    \centering
    \begin{adjustbox}{scale=0.75}
        \begin{tikzpicture}[thick]
    
    \node[vsourcesinshape, name=G1, rotate=90] at (0.0, 2.75){};
    \draw (G1.south) -- (1.25, 2.75);
    
    \draw[line width=3pt] (1.25, 3.50) node[above]{$1$} --(1.25, 2.0);
    
    \draw (1.25, 2.75) to[oosourcetrans] (3.00, 2.75);

    \draw[line width=3pt] (3.00, 3.50) node[above]{$2$} -- (3.00, 2.0);

    \draw (3.00, 2.75) to[short] (3.50, 2.75);
    \draw (3.50, 2.75) -- ++(0.00, 1.00) to[generic] (5.65, 3.75);
    \draw (3.50, 2.75) -- ++(0.00, -1.00) to[generic] (5.65, 1.75);

    \draw[line width=3pt] (5.65, 4.50) node[above]{$3$} -- (5.65, 3.0);
    \draw[-{Triangle[]}] (5.65, 3.25) -| ++(0.50, -0.75);

    \draw[line width=3pt] (5.65, 1.00) node[below]{$5$} -- (5.65, 2.50);
    \draw[-{Triangle[]}] (5.65, 1.25) -| ++(0.50, -0.75);

    \draw(5.65, 3.75) -- (6.5, 3.75) to[short, o-o] (7.25, 3.75) -| ++(0.50, -1.00);

    \draw(5.65, 1.75) -- (6.5, 1.75) to[nos, o-o] (7.25, 1.75) -| ++(0.50, 1.00);

    \draw (7.75, 2.75) -- ++(0.50, 0.00);
    \draw[line width=3pt] (8.25, 3.50) node[above]{$4$} -- (8.25, 2.0);
    \draw[-{Triangle[]}] (8.25, 2.75) -| ++(0.50, -0.75);
\end{tikzpicture}
    \end{adjustbox}
    \caption{The simple 5-bus test case  shown in its final configuration ($\gamma=1$).}
    \label{fig:simple-network}
\end{figure}
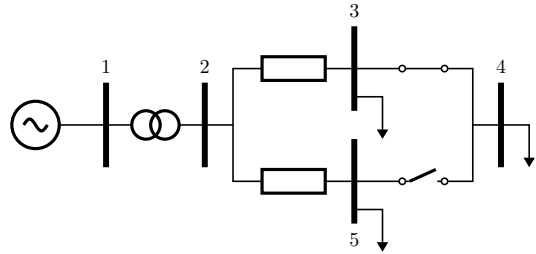

\subsection{Power flow solution prediction as admittance changes}
\label{sec:experiments:predict_sol}

In Fig. \ref{fig:linearization-illustration}, we show the linearization of the simple 5-bus switching network. The top plot in particular illustrates the significant performance gains from linearizing about the midpoint of the interval of possible admittance parameters.
\begin{figure}
    \centering
    \includegraphics[width=0.95\linewidth,keepaspectratio]{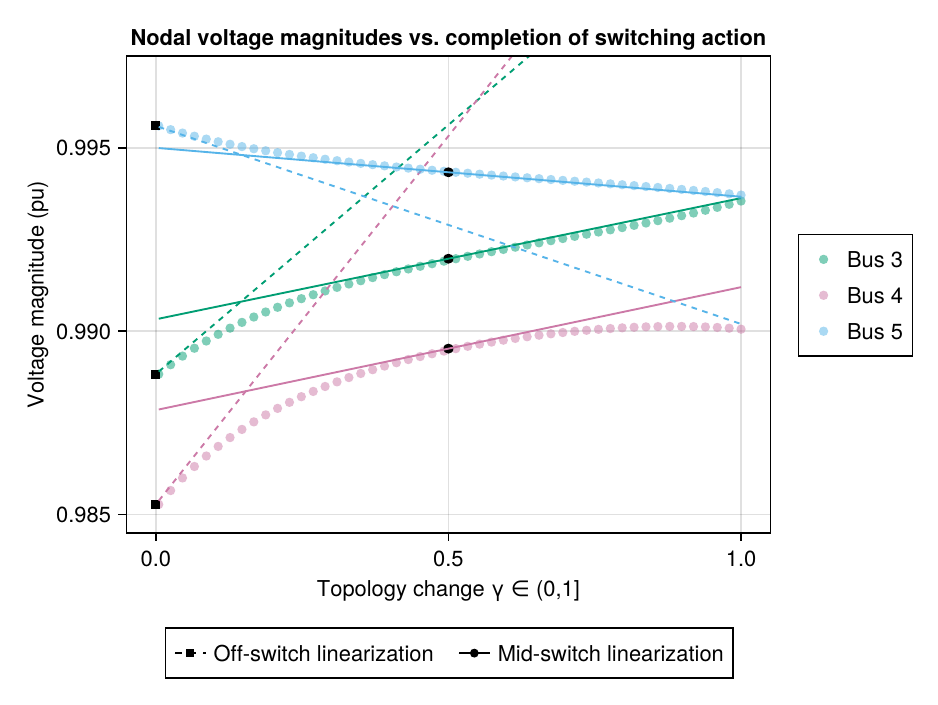}
    \includegraphics[width=0.95\linewidth,keepaspectratio]{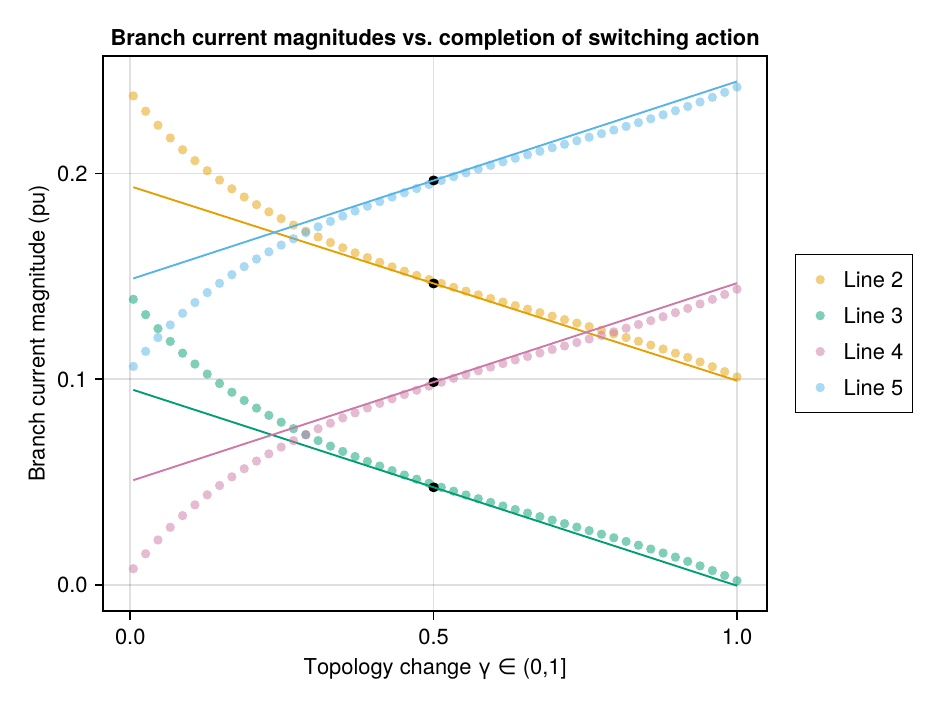}
    \vspace{-10pt}
    \caption{Illustrative linearization of nodal voltage magnitudes (top) and branch current magnitudes (bottom) as a function of the network topology change for the simple test case shown in Fig. \ref{fig:simple-network}. }
    \label{fig:linearization-illustration}
\end{figure}

We expand the result to the full 33-bus test case proposed by Baran and Wu and show the results in Fig.~\ref{fig:full-bw}. In particular, this plot shows the performance of the linearization for predicting the AC power flow solution for the power set of the switchable lines, which contains all possible topology configurations. The top figure shows the performance of repeated linearization about the average of the base case and the topology scenario admittance (gold), compared to ``doing nothing", i.e., using the base case AC power flow solution (blue). The bottom figure shows a comparison of the results in the top figure with two other methods: averaging all $2^{\abs{\setE}}$ topology scenarios (green) and averaging between the two admittance parameters with all switches on and all switches off (purple).
\begin{figure}
    \centering
    \includegraphics[width=0.95\linewidth,keepaspectratio]{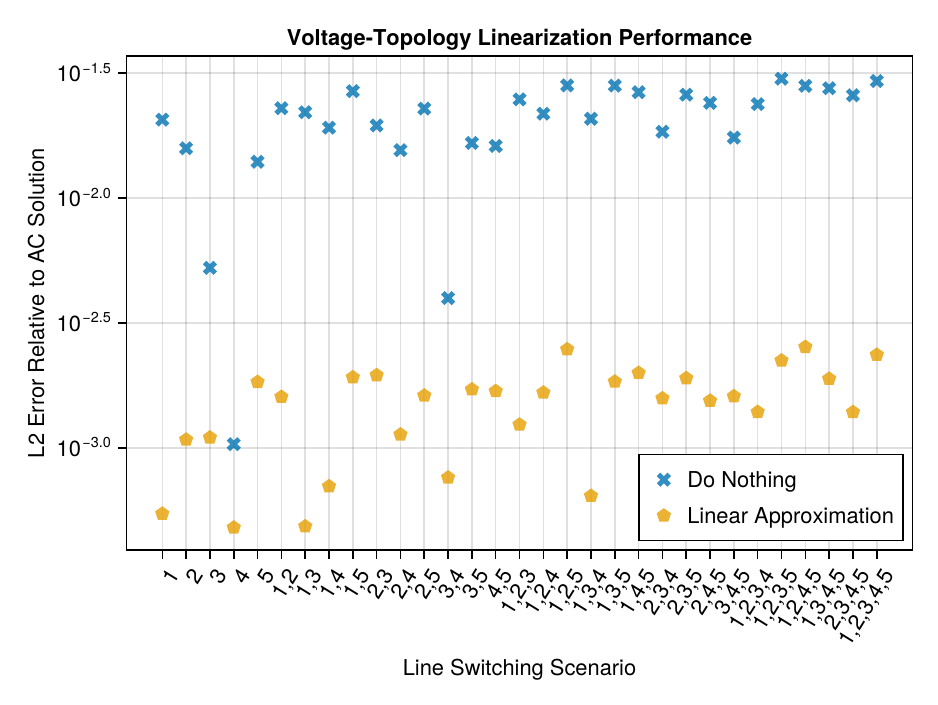}
    \includegraphics[width=0.95\linewidth,keepaspectratio]{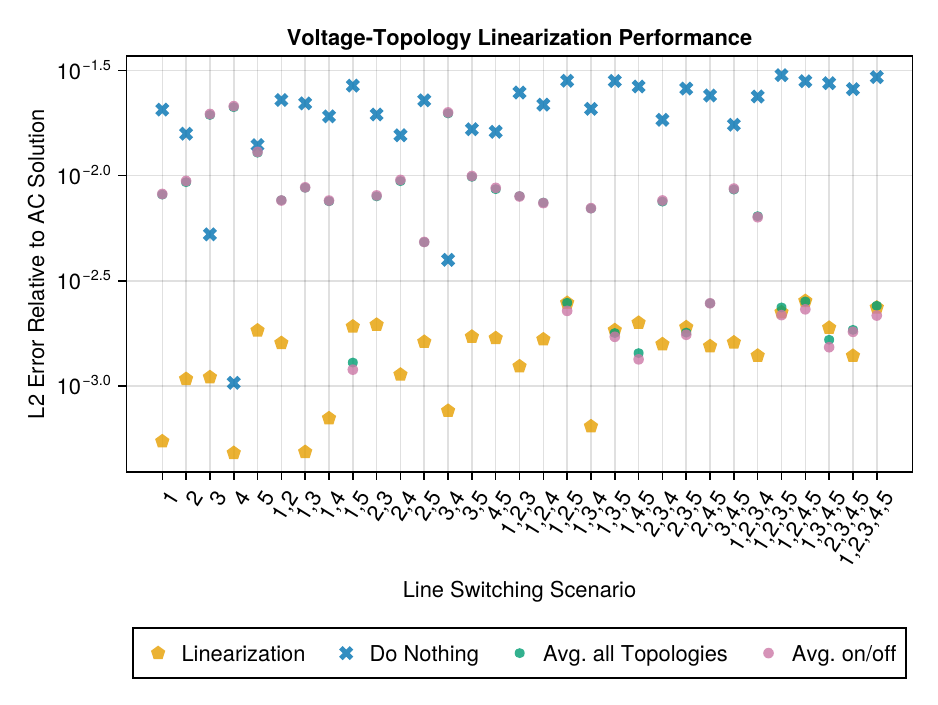}
    \caption{Relative error of approximating the AC power flow solution produced by the power set of switching configurations for the \texttt{case33bw} radial network.}
    \label{fig:full-bw}
\end{figure}

\subsection{Continuous Admittance Control}
\subsubsection*{Voltage regulation application}
\label{sec:voltage_regulation_simulation}
We next demonstrate a voltage regulation problem that takes advantage of the proposed method, which we apply to the CIGRE low-voltage distribution network model~\cite{CIGREREF}. The results are shown in Fig. \ref{fig:vreg-results}.
\begin{figure}
    \centering
    \includegraphics[width=0.95\linewidth,keepaspectratio]{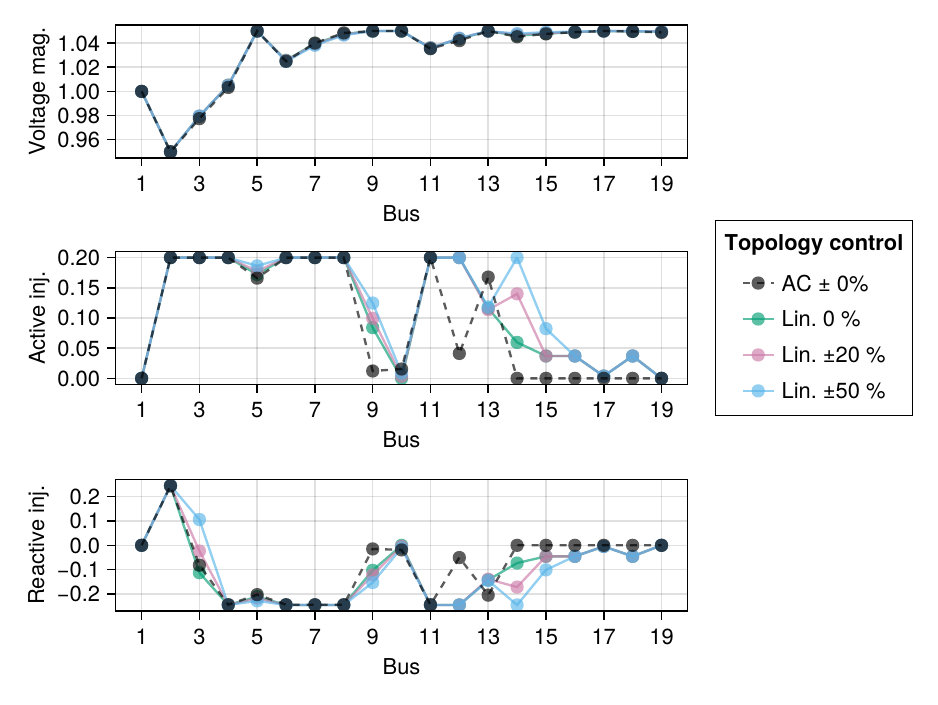}
    \caption{Optimal solutions of the voltage regulation problem for the CIGRE low voltage test case \cite{CIGREREF}. The plot compares the minimum curtailment solution using the full AC power flow equations and the linearized formulation \eqref{eq:linear-program-vreg} with admittance bounds ranging between $\pm 0 \% $ and $\pm 50\%$.}
    \label{fig:vreg-results}
\end{figure}

The performance of the method, in terms of relative error to the AC solution and the maximum active power injection for the \texttt{case33bw} test network, is shown in Fig. \ref{fig:case33-ac-err-max-p}.
\begin{figure}
    \centering
    \includegraphics[width=0.95\linewidth,keepaspectratio]{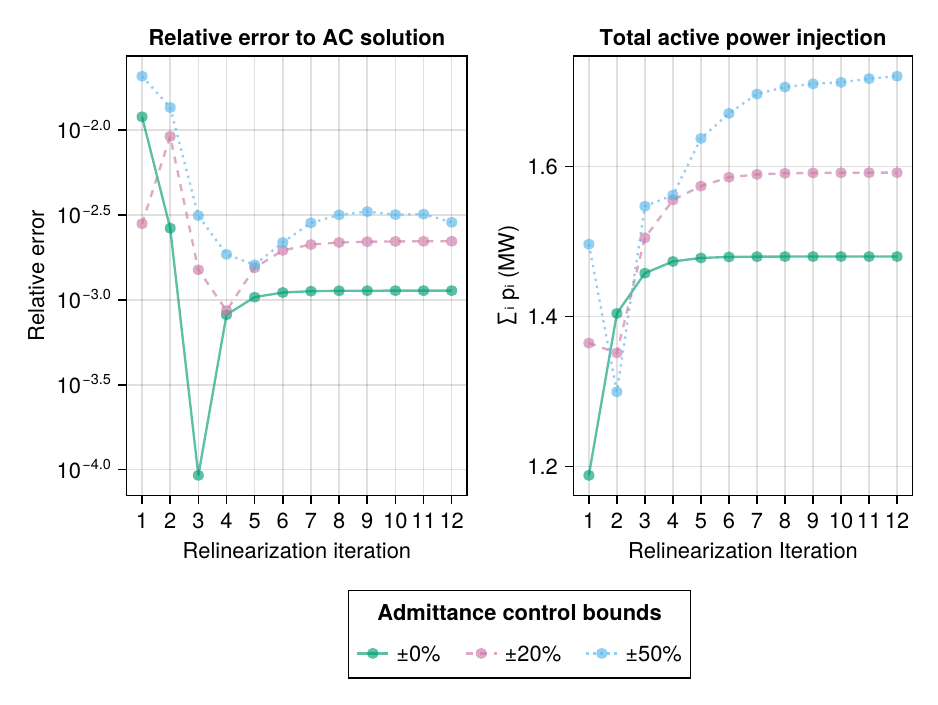}
    \caption{Relative error to AC solution (left) and maximum active power injections (right) yielded by solutions of \eqref{eq:linear-program-vreg} vs. number of successive linearizations about the midpoint. These plots only consider voltage constraints, and neglect the current constraints.}
    \label{fig:case33-ac-err-max-p}
\end{figure}

Table \ref{table:linearized-vreg-performance} shows the performance of the voltage regulation problem \eqref{eq:linear-program-vreg} across different ranges of~$\gamma$, and with iterative re-linearization for multiple test cases. The predicted network parameters, active power injections, and reactive power injections are collected after solving the problem, and are used to augment the network model, which is then used to solve a standard AC power flow. The nodal voltage magnitudes output by the AC power flow solution and predicted by the linearized program~\eqref{eq:linear-program-vreg} are then compared in the two leftmost columns of the table; this includes the relative percentage error in the sense of the~$\ell_2$ norm and the absolute error in the sense of the~$\ell_\infty$ norm. Furthermore, the penultimate rightmost column displays the maximum active power injections determined by the program, with the far rightmost column reporting the percentage increase of this quantity relative to the baseline case where~$\gamma$ is invariant. 
\begin{table}[t!]
    \caption{Relative $\ell_2$ error and absolute $\ell_\infty$ error with AC solution, active injection capacity, and increase in active injection capacity with continuous admittance control}
    \centering
    \begin{tabular}{clllll}
    \toprule
    \multicolumn{1}{l}{\textbf{Case}} &
    $\gamma$ &
    \multicolumn{1}{l}{$\ell_2$ (\%)} &
    \multicolumn{1}{l}{$\ell_\infty$ (pu)} &
    \multicolumn{1}{l}{$\vone^\T \vp^*$ (MW)} &
    \multicolumn{1}{l}{$\uparrow$ (\%)} 
    \\ 
    \midrule
    \multirow{3}{*}{cigreLV} & $\pm 0$  & 0.01 & 2.26 $\times 10^{-4}$ & 2.42 & -     \\
                            & $\pm 20$ & 0.28 & 3.28 $\times 10^{-3}$ & 2.77 & 14.65 \\
                            & $\pm 40$ & 1.48 & 0.016    & 3.11 & 28.50 \\ \hdashline
    \multirow{3}{*}{33bw}    & $\pm 0$   & 0.11 & 1.64 $\times 10^{-3}$ & 1.48 & -       \\
                            & $\pm 20$ & 0.22 & 3.08 $\times 10^{-3}$ & 1.59 & 7.50    \\
                            & $\pm 40$ & 0.32 & 5.00 $\times 10^{-3}$ & 1.71 & 15.28   \\
                            \hdashline
    \multirow{3}{*}{69bw}   & $\pm 0$   & 0.45 & 0.016    & 8.26 & -       \\
                            & $\pm 20$ & 0.40 & 0.014    & 8.68 & 4.98    \\
                            & $\pm 40$ & 0.39 & 0.013    & 9.03 & 9.38 \\ 
    \bottomrule
    \end{tabular}
\label{table:linearized-vreg-performance}
\end{table}

\subsection{Implications and Analysis of Results}
\label{sec:implications}
The contributions of our work have several engineering implications, from both practical and theoretical perspectives. One of the most immediate theoretical consequences of the technique outlined in Section \ref{sec:application-linearization} is centrally showcased in the numerical results in Section \ref{sec:experiments:predict_sol}, showing the \textit{prediction of power flow solutions} as a function of network admittance parameters. 

Conventionally, determining power flow solutions across a range of network admittances would require repeatedly executing a power flow solver across $k$ intervals, as shown in the dotted lines in Fig. \ref{fig:linearization-illustration}. If a conventional Newton-Raphson algorithm was executed for $m$ iterations, sweeping across the admittance space would require $km$ Jacobian factorizations. In contrast, applying the proposed linearization technique allows us to \textit{estimate the outcome of this iterative procedure} with a precomputed Jacobian with respect to admittance. This prediction would amount to $k$ matrix multiplications.




\section{Conclusion}
\label{sec:conclusion}
In this work, we developed a rigorous framework to compute the sensitivities of power flow solutions with respect to network admittance parameters. Our method efficiently derives these sensitivities by differentiating the bus injection model of the power flow equations and leverages the implicit function theorem to derive admittance sensitivities. We established that these coefficients can be obtained by solving a linear system of equations, which remains uniquely solvable as long as the power flow Jacobian is invertible. The scheme was applied to compute sensitivity of complex voltages,  line currents and power flows.

We demonstrated the application of these sensitivities in obtaining a linearized formulation for continuous admittance control, enabling computationally efficient network control. We evaluated the proposed method on multiple benchmark networks demonstrating the tractability of admittance control in power networks. The proposed framework paves the way for tractable linear formulation for adaptive network reconfiguration, optimal line switching, and continuous admittance control operational flexibility.

\bibliographystyle{references/IEEEtran}
\bibliography{references/refs}

\appendix

\subsection{Further details on the differentiation of power injections with respect to admittance parameters}
\label{apdx:diff-power-to-params}
In the expression \eqref{eq:implicit-function-theorem-sol}, the Jacobian with respect to the admittance parameters can be expanded into a~$2 \abs{\setN} \times 2 \abs{\setE}$ block matrix of the form
\begin{equation}
\begin{bmatrix}
    \frac{\partial \vkappa}{\partial \vg} & \frac{\partial \vkappa}{\partial \vb}
\end{bmatrix} :=
    \begin{bmatrix}
        \frac{\partial \vp}{\partial \vg} & \frac{\partial \vp}{\partial \vb} \\
        \frac{\partial \vq_\setP}{\partial \vg} & \frac{\partial \vq_\setP}{\partial \vb}
    \end{bmatrix}.
\end{equation}
Each block is a~$\abs{\setN} \times \abs{\setE}$ Jacobian matrix taken with respect to the conductance or susceptance parameters $\vg$ and $\vb$, respectively. The derivatives that comprise the entries of these matrices
can be analytically computed from~\eqref{eq:special-pf-eq-representation}. In what follows, $\mathds{1}\{\cdot\}$ denotes the indicator function. When no phase-shifting transformers are present, the entries corresponding to the active power injections are given as
\begin{subequations}
    \label{eq:p-to-w-derivatives}
\begin{align}
    \frac{\partial p_i}{\partial g_{i j}} &= \frac{\partial p_i}{\partial g_{ji}} = v^{2}_{i} - \mathds{1}\left\{j \neq i\right\}  v_i v_j  \cos\left(\delta_{ij}\right), \\
    \frac{\partial p_i}{\partial g_{k j}} &= 0, \quad k,j \neq i, \\
    \frac{\partial p_i}{\partial b_{i j}} &= \frac{\partial p_i}{\partial b_{ji}} = - \mathds{1}\left\{j \neq i\right\} v_i v_j \sin\left(\delta_{ij}\right), \\
    \frac{\partial p_i}{\partial b_{k j}} &=  0, \quad k,j \neq i,
\end{align}
\end{subequations}
for all $i \in \setN$, $(k,j) \in \setE$. Analogously, in the absence of phase-shifting transformers, the entries corresponding to the reactive power injections are
\begin{subequations}
\label{eq:q-to-w-derivatives}
    \begin{align}
        \frac{\partial q_i}{\partial g_{ij}} &= \frac{\partial q_i}{\partial g_{ji}} = - \mathds{1}\left\{ j \neq i\right\}v_i v_j \sin(\delta_{ij}), \\
        \frac{\partial q_i}{\partial g_{kj}} &= 0, \quad k,j \neq i, \\
        \frac{\partial q_i}{\partial b_{ij}} &= \frac{\partial q_i}{\partial b_{ji}} - v_i^2 + \mathds{1}\left\{ j \neq i \right\} v_i v_j \cos(\delta_{ij}), \\
        \frac{\partial q_i}{\partial g_{kj}} &= 0, \quad k,j \neq i,
    \end{align}
\end{subequations}
for all $i \in \setP$, $(k,j) \in \setE$.

\end{document}